\documentclass[final]{article}

\usepackage[numbers,sort&compress]{natbib}
\usepackage{neurips-template/neurips_2026}
\usepackage[T1]{fontenc}
\usepackage{lmodern}
\usepackage{amsmath,amssymb,amsthm}
\usepackage{mathtools}
\usepackage{bm}

\usepackage{graphicx}
\usepackage{float}
\usepackage[labelfont=bf,font=small]{caption}
\usepackage{subcaption}

\usepackage{booktabs}
\usepackage{tabularx}
\usepackage{multirow}
\usepackage{array}
\usepackage{colortbl}
\usepackage{makecell}
\usepackage{footnote}

\usepackage[table]{xcolor}

\definecolor{eqclo}{RGB}{234,243,222}   
\definecolor{eqcmd}{RGB}{192,221,151}   
\definecolor{eqchi}{RGB}{250,238,218}   
\definecolor{eqcvh}{RGB}{250,199,117}   
\definecolor{eqcex}{RGB}{240,153,123}   
\definecolor{indet}{RGB}{216, 90,  48}  

\definecolor{CapicuRed} {RGB}{ 175,0, 0}
\definecolor{CapicuSlate}{RGB}{ 44, 44, 42}
\definecolor{CapicuMuted}{RGB}{100,100, 96}
\definecolor{CapicuRule} {RGB}{200,200,196}

\usepackage[ruled,linesnumbered,vlined]{algorithm2e}

\usepackage[colorlinks=true,
            linkcolor=blue!70!black,
            citecolor=blue!70!black,
            urlcolor=CapicuRed]{hyperref}
\usepackage{url}
\usepackage{fontawesome5}

\usepackage{enumitem}
\usepackage{mdframed}
\usepackage{tcolorbox}          

\newcommand{\rpn}{\mathrm{RP}}

\newcommand{\grid}{\mathcal{G}}

\newcommand{\ort}{\textsc{ONNX Runtime}}

\theoremstyle{definition}
\newtheorem{cdef}{Definition}
\newtheorem{cprop}{Proposition}

\newcommand{\PaperTitle}{%
  INT8 Quantization Makes ARM Edge Inference Dispatch-Invariant
}

\newcommand{\PaperAuthors}{%
  Sebasti\'{a}n A. Cruz Romero\textsuperscript{1}, Shenied E. Maldonado Guerra\textsuperscript{1}
}

\newcommand{\PaperAffils}{%
  \textsuperscript{1}Capic\'{u} Technologies
  \\[2pt]
}

\newcommand{\PaperVersion}{\url{https://capicu.ai}}
\newcommand{\PaperAbstractText}{
On x86, kernel dispatch fragments the outputs of the same neural network into many equivalence classes across hardware. We ask whether the same fragmentation governs ARM edge inference, where most edge ML actually runs. Across four Raspberry Pi devices spanning Cortex-A53, A72, and A76 under ONNX Runtime CPU, microarchitecture is not observable in the outputs of a fixed FP32 CNN. Holding hardware constant at Cortex-A76 and switching only the execution provider, FP32 outputs disagree on every CIFAR-10 image with a mean remaining precision of 14.97 of 23 mantissa bits. INT8 QDQ post-training quantization collapses both axes to a single equivalence class. We trace this to a structural property of QDQ graphs that we call H1+H2: discrete-grid inputs make any Conv dispatch-deterministic (H1) and QuantizeLinear at every layer boundary preserves that precondition (H2). H1+H2 predicts that bit-exact agreement should extend to production CNNs under runtimes that confirmably exercise different ARM microkernels. We verify this on MobileNetV2 and ResNet50V2 under TensorFlow Lite with XNNPACK, where timing evidence confirms SDOT dispatch on A76 and NEON multiply-accumulate on A72 yet every intermediate INT32 accumulator and every final output is byte-identical across 500 ImageNet images per model. We then identify the specific x86 mechanism that breaks the same invariant on x86, namely \texttt{PMADDUBSW} saturating INT16 intermediates, which has no ARM analogue. The Schl\"ogl et al. divergence phenomenon is delineated rather than contradicted. For practitioners deploying quantized CNNs across heterogeneous ARM fleets, the operational consequence is direct. INT8 inference is the reproducible mode and the relevant behavioral variation axis is precision, not microarchitecture.
}

\newcommand{\CapicuHeader}{%
  \thispagestyle{empty}%
  \setlength{\parindent}{0pt}%
  \newlength{\logowidth}\setlength{\logowidth}{0.18\linewidth}%
  \begin{minipage}[c]{\logowidth}%
    \includegraphics[width=\linewidth]{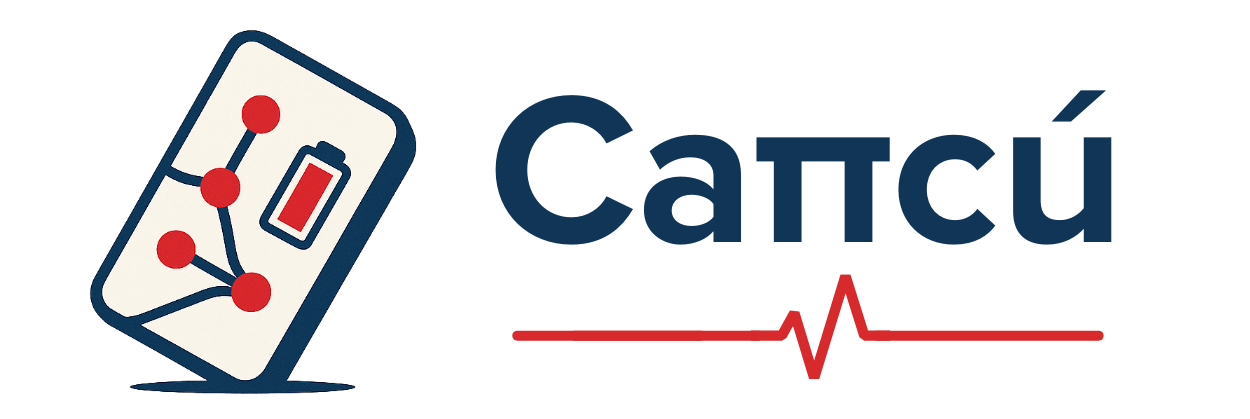}%
  \end{minipage}%
  \hfill%
  \begin{minipage}[c]{0.40\linewidth}%
    \raggedleft%
    {\footnotesize\color{CapicuMuted}\PaperVersion}%
  \end{minipage}%
  \par\vspace{4pt}%
  {\color{CapicuRule}\rule{\linewidth}{0.5pt}}%
  \par\vspace{12pt}%
  \begin{center}%
    {\sffamily\bfseries\LARGE\color{CapicuSlate}\PaperTitle}%
    \par\vspace{32pt}%
    {\normalsize\color{CapicuSlate}\PaperAuthors}%
    \par\vspace{4pt}%
    {\small\color{CapicuMuted}\PaperAffils}%
  \end{center}%
  \vspace{40pt}

  \begingroup

  \begin{tcolorbox}[colback=blue!5!white!, colframe=blue!5!white!, coltitle=white!]

    \PaperAbstractText
    \\ \\ \textbf{Publication Date:} August 1, 2026
    \\ \textbf{Correspondence:~\href{mailto:scruzromero@capicu.ai}{\texttt{scruzromero@capicu.ai}}}
  
\end{tcolorbox}
  \par\vspace{10pt}%
  \endgroup
  \par\vspace{8pt}%

}

\begin{document}

\CapicuHeader

\section{Introduction}

Edge machine learning (edge ML) has shifted from a research direction to a deployment substrate. ARM Cortex-A and Cortex-M devices, the dominant silicon in mobile SoCs, single-board computers, and embedded controllers, now rutn an increasing share of the inference workload that previously belonged to cloud GPUs. The standard pipeline trains a neural network in floating point on accelerator hardware, compresses the trained model through post-training quantization, and serves the compressed model under a CPU runtime such as ONNX Runtime, TensorFlow Lite, or PyTorch Mobile. Two architectural facts shape the engineering reality of this pipeline. First, the same compressed model is typically served across a heterogeneous fleet of edge devices spanning multiple ARM microarchitectures and silicon vendors, since edge fleets accumulate hardware over years rather than refresh in lockstep. Second, the runtime is itself a multi-backend system that selects different kernel implementations on each device through feature-detection logic at session initialization.

These two facts produce a phenomenon that has only recently been characterized rigorously. The same trained model and the same input can yield different outputs on different devices, not because of model error or training noise, but because the kernel that executes the convolution differs across devices. Floating-point non-associativity, understood since Goldberg \cite{goldberg1991every} and codified in IEEE 754 \cite{ieee754}, means that the order in which a reduction sums its operands changes the result through rounding. SIMD-width differences across hardware change the order. Schlögl et al. \cite{schlogl2023causes} gave the phenomenon a precise vocabulary on x86. Configurations of (hardware, runtime, execution provider) form \emph{equivalence classes} (EQCs) whose members produce byte-identical outputs, and \emph{remaining precision} (RP) measures pairwise drift as the number of leading identical mantissa bits between two outputs. A single TensorFlow binary running ResNet-18 on 75 x86 platforms fragments into as many as 26 EQCs, and the fragmentation tracks SSE versus AVX versus AVX-512 kernel boundaries. Subsequent work \cite{zhang2025hspi} has shown that the fragmentation is structured enough to support hardware fingerprinting from model outputs alone, with practical implications for supply-chain auditing of inference services. Adjacent results have separated other sources of inference non-determinism, including batch-invariance failures in LLM serving \cite{yuan2025nondeterminism}, GPU atomic reductions \cite{shanmugavelu2024impacts}, and cuDNN version drift on identical hardware \cite{pham2020problems}.

Two gaps in this picture are central to edge deployment. The first is the substrate gap. The EQC and RP framework was developed and validated on x86 CPUs and NVIDIA GPUs. Whether the same fragmentation holds on ARM, the dominant edge substrate, is not a corollary of the x86 result because the underlying instruction sets are structurally different. ARM and x86 implement INT8 matrix multiplication through different SIMD instructions, with different accumulator precisions and different saturation semantics. The x86 finding transfers to ARM only if the implicated mechanism transfers. The second gap is the precision gap. The x86 characterization was conducted in FP32. Edge inference is increasingly INT8, because integer arithmetic, reduced memory footprint, and dedicated dot-product instructions together yield order-of-magnitude latency improvements at acceptable accuracy cost. Post-training static quantization in the QDQ format \cite{jacob2018quantization,krishnamoorthi2018quantizing,wu2020integer} is the standard interchange representation for ONNX and TFLite INT8 deployment. Most evaluations of QDQ focus on end-task accuracy preservation rather than on whether quantization changes the numerical reproducibility picture. Practitioners routinely observe that INT8 inference is more reproducible than FP32 across hardware, but the structural reason has not been articulated in the literature.

This paper closes both gaps and identifies the boundary at which the x86 result genuinely does not extend to ARM.

\section{Related Work}

\label{sec:background}
\subsection{Floating-Point Non-Associativity and Numerical Reproducibility}
Floating-point arithmetic is not associative. Goldberg \cite{goldberg1991every} traced the consequences for scientific computing and the IEEE 754 standard \cite{ieee754} codified the rounding rules under which compliant hardware operates. Higham \cite{higham2002accuracy} catalogued the resulting reduction-order sensitivities in numerical linear algebra, and Demmel and Nguyen \cite{demmel2013fast} developed reproducible parallel summation algorithms that recover bit-exactness at the cost of additional arithmetic. The neural network inference setting inherits the same non-associativity but amplifies it through two compounding factors. Modern CPU and GPU runtimes parallelize convolutions across SIMD lanes whose width depends on the hardware, and they dispatch to different kernel implementations at runtime through feature-detection logic, so the reduction order is determined jointly by the hardware and by the runtime's dispatch policy rather than by the model author.

The cumulative effect has been documented across the inference stack. NVIDIA's cuDNN documentation \cite{cudnn_repro} states explicitly that no routines guarantee bitwise reproducibility across architectures. Pham et al. \cite{pham2020problems} measured accuracy variance from CUDA and cuDNN version changes alone on identical hardware. Shanmugavelu et al. \cite{shanmugavelu2024impacts} quantified run-to-run variability from GPU atomic reductions on Grace Hopper systems. Yuan et al. \cite{yuan2025nondeterminism} traced LLM serving non-determinism to batch-invariance failures in attention kernels rather than to floating-point non-associativity per se, and recovered deterministic inference on A100 hardware through kernel-level fixes. These results collectively establish that inference non-determinism arises from multiple distinct layers of the stack, with hardware microarchitecture being one among several sources rather than the sole source.

\subsection{Equivalent Classes and Remaining Precision Framework}
Schlögl et al. introduced the EQC and RP formalism in a sequence of papers that progressively sharpened the characterization of hardware-linked inference divergence. The forensic statement \cite{schloegl2021forensicability} demonstrated that the same trained network produces structured, hardware-dependent output deviations across CPUs and GPUs. The iNNformant work \cite{schloegl2021innformant} showed that adversarially crafted boundary samples could flip predicted labels across microarchitectures, establishing that the deviations are not merely numerical noise but can carry through to classification decisions on appropriately constructed inputs. The mechanistic statement \cite{schlogl2023causes} closed the framework: configurations of (hardware, runtime, execution provider) form equivalence classes whose members yield byte-identical outputs, and remaining precision, the count of leading identical mantissa bits between two outputs capped at 23 for IEEE binary32, quantifies pairwise drift. A single TensorFlow 2.5 binary running ResNet-18 on 75 x86 platforms fragments into as many as 26 EQCs whose boundaries align with SSE 4.2, AVX2, and AVX-512F kernel dispatch. Zhang et al. \cite{zhang2025hspi} extended the line to GPU fingerprinting through border inputs and logit distribution classifiers, identifying serving platforms with up to 100\% accuracy and demonstrating the supply-chain auditing implications of the underlying divergence.

The framework leaves two empirical questions open for edge deployment. The original characterization was performed on x86 CPUs and NVIDIA GPUs, and ARM Cortex-A devices, which dominate the edge fleet, were not covered. The original experiments used FP32 inference, and INT8 post-training quantization, which dominates edge serving, was not probed. Whether the fragmentation persists on ARM and whether quantization changes the answer are the two questions this paper addresses.

\subsection{Post-Training Quantization and the QDQ Format}
Jacob et al. \cite{jacob2018quantization} established the integer-arithmetic-only inference scheme that underlies modern INT8 deployment. Activations and weights are mapped to 8-bit integers through affine scale-and-zero-point quantizers, and convolutions reduce to INT8 dot products accumulated into wider integer destinations. Krishnamoorthi \cite{krishnamoorthi2018quantizing} consolidated the practitioner guidance for post-training and quantization-aware training of CNNs. Wu et al. \cite{wu2020integer} reported empirical accuracy preservation across image and language workloads. Nagel et al. \cite{nagel2020up} refined the rounding policy at quantization time through AdaRound. The QDQ format wraps an FP32 graph with explicit QuantizeLinear and DequantizeLinear nodes at every layer boundary, exposing the quantization parameters as graph metadata that the runtime's optimizer can pattern-match against. This representation is now the standard interchange format for INT8 ONNX models and is the format used by TensorFlow Lite's full-integer quantization path.

The accumulator design inside a quantized convolution is the variable that matters for cross-platform reproducibility, and the literature is uneven on this point. Colbert et al. \cite{colbert2023a2q} derived bit-width bounds that guarantee overflow avoidance under $L_{1}$-norm weight constraints, establishing the formal precondition under which integer accumulation is exact. The oneDNN documentation \cite{onednn_int8} describes how the x86 VPMADDUBSW instruction introduces saturating INT16 intermediates between INT8 multiplicands and the INT32 destination, and documents a weight-scaling workaround to mitigate the resulting accuracy degradation on AVX2 platforms without VNNI. The ONNX Runtime quantization documentation \cite{onnxrt_quant} similarly warns of saturation on x86 without VNNI and recommends a reduce-range strategy. These resources treat the INT16 saturation step as a single-platform accuracy concern. None of them connect it to cross-platform EQC divergence, which is the connection this paper draws.

\subsection{ARM SIMD and Runtime Microkernel Dispatch}
The two ARM microarchitectures most relevant to current edge deployment differ in a specific INT8 capability. The Cortex-A72 implements ARMv8-A with NEON, providing 128-bit SIMD over INT8 lanes but no native INT8 dot product. The Cortex-A76 implements ARMv8.2-A with the dot-product extension, exposing the SDOT instruction that computes four INT8 multiply-accumulates into INT32 in one cycle. Gope et al. \cite{gope2025arm} describe Arm's reference INT8 kernels for Cortex-A and document the throughput advantage of the SDOT path. San Adri'an et al. \cite{san_adrian2025cambrian} survey the broader ecosystem of mixed-precision matrix multiplication microkernels across ARM NEON, NEON with dot-product, SVE2, and RISC-V vector extensions, characterizing the performance landscape but not output reproducibility.

XNNPACK \cite{dukhan2021xnnpack,dukhan2019indirect} is the CPU delegate that backs TensorFlow Lite and serves as an execution provider under ONNX Runtime. It detects available SIMD features at initialization through the \texttt{cpuinfo} library and dispatches to one of several microkernel families per operator. On ARM, the relevant fork is between \texttt{qs8-gemm} kernels that compile around SDOT when \texttt{asimddp} is reported and fallback kernels that use \texttt{SMULL}, \texttt{SMLAL}, and \texttt{SADALP} when it is not. The mathematical operation specified by an ONNX or TFLite op is identical across the two paths, but the implementations differ in accumulation order, fused multiply-add usage, and SIMD width. All are IEEE-conforming and none are bit-equivalent by construction, which is the precise condition under which the EQC framework predicts fragmentation.

\subsection{Edge Inference Reproducibility Across Hardware}
Empirical studies of edge inference across heterogeneous ARM hardware have focused on performance rather than reproducibility. Jain et al. \cite{jain2020efficient} benchmarked INT8 across PyTorch-FBGEMM, QNNPACK, and TensorFlow Lite on Intel Cascade Lake and Raspberry Pi 4B, measuring latency and throughput. The MLPerf Inference benchmark suite \cite{reddi2020mlperf} provides standardized cross-platform latency and accuracy measurements but does not report whether the outputs are byte-identical across platforms. The San Adri'an et al. survey compared microkernel performance across ARM and RISC-V platforms including a Cortex-A72 Pi, again on performance metrics. To our knowledge, no prior work has directly characterized bit-exact INT8 inference outputs across ARM microarchitectures under runtimes with confirmed different microkernel dispatch. This paper supplies that characterization, identifies its structural cause, and locates the boundary at which the same characterization would fail under different runtime arithmetic.

\section{Methods}
\label{sec:setup}

\subsection{Hardware}

We instrument four Raspberry Pi nodes spanning three ARM microarchitectures, with two A53 nodes on different SoCs to control for silicon differences below the microarchitecture level (Table~\ref{tab:hw}). The four-device set is used for the small-CNN ONNX Runtime experiments. Production-CNN TFLite experiments use Node A and Node B because the smaller A53 boards do not provide enough RAM or compute headroom for ResNet50V2.

\begin{table}[!t]
\caption{Hardware Configurations\label{tab:hw}}
\centering
\footnotesize
\begin{tabular}{@{}llllll@{}}
\toprule
Device & Board & SoC & CPU & ISA & RAM \\
\midrule
Node A & Pi 5      & BCM2712 & Cortex-A76 & ARMv8.2-A & 4 GB \\
Node B & Pi 4B     & BCM2711 & Cortex-A72 & ARMv8-A   & 1.8 GB \\
Node C & Pi 3B     & BCM2837 & Cortex-A53 & ARMv8-A   & 1 GB \\
Node D & Pi Zero 2W & RP3A0  & Cortex-A53 & ARMv8-A   & 512 MB \\
\bottomrule
\end{tabular}
\\[2pt]
\raggedright
Nodes C and D share the Cortex-A53 microarchitecture on different SoCs. Node A is the only device exposing \texttt{asimddp}, which enables SDOT microkernel dispatch under XNNPACK.
\end{table}

\subsection{Models, Data, and Protocol}

\textsc{CifarSmall} is a 12-layer CNN (Conv, ReLU, MaxPool, Conv, ReLU, MaxPool, Flatten, Gemm, ReLU, Gemm, ReLU, Gemm) trained for 30 epochs on CIFAR-10 \cite{krizhevsky2009cifar} with seed 0 to test accuracy 63.4\%. It is exported in two formats, FP32 and INT8 QDQ (per-tensor symmetric quantization, training-set calibration), and serves as the diagnostic model. MobileNetV2 \cite{sandler2018mobilenetv2} (3.4M parameters, max channel $K = 1280$) and ResNet50V2 \cite{he2016identity} (25.6M parameters, $K = 2048$) serve as production CNNs. Both are exported to TFLite FP32 and quantized via full-integer post-training quantization with 200 calibration images.

Inputs for the small-CNN experiments are the first 1{,}000 CIFAR-10 test images, normalized to $[0,1]$, NCHW float32. Inputs for the production-CNN experiments are 500 ImageNet ILSVRC2012 validation images selected lexicographically, preprocessed once on a development machine (resize to 256, center-crop to 224, normalize to $[0,1]$). All inputs are SHA-256-verified byte-identical across nodes. Single-image inference, single-threaded (\texttt{num\_threads=1}), three warmup inferences discarded before measurement. Output hash: SHA-256 of the float32 output tensor in row-major byte order.

\subsection{Conditions and Runtime Stacks}

The small-CNN experiment uses five \ort{} conditions (Table~\ref{tab:conds}). The Debian-packaged \texttt{python3-onnxruntime} 1.21 (on Node A only) is the sole build we identified with a working \texttt{XnnpackExecutionProvider} on ARM. It requires ARMv8.2-A and raises \texttt{SIGILL} at import on Nodes B through D. The PyPI \texttt{onnxruntime} 1.26 wheel, used on Nodes B through D, provides CPU EP only in practice. The production-CNN experiment uses three TFLite conditions on the same \texttt{ai-edge-litert==2.1.4} aarch64 wheel from PyPI: Condition A (INT8, A76, XNNPACK selects SDOT microkernels), Condition B (FP32, A76, cross-precision reference), and Condition C (INT8, A72, XNNPACK selects NEON-only microkernels).

\begin{table}[!t]
\caption{Small-CNN ONNX Runtime Conditions\label{tab:conds}}
\centering
\footnotesize
\begin{tabular}{@{}lllll@{}}
\toprule
Cond & Device & EP (intended) & EP (resolved) & ORT \\
\midrule
A & Node A & XNNPACK & XnnpackExecutionProvider & 1.21 \\
B & Node A & CPU     & CPUExecutionProvider     & 1.21 \\
C & Node B & CPU     & CPUExecutionProvider     & 1.26.0 \\
D & Node C & CPU     & CPUExecutionProvider     & 1.26.0 \\
E & Node D & CPU     & CPUExecutionProvider     & 1.26.0 \\
\bottomrule
\end{tabular}
\\[2pt]
\raggedright
``EP (intended)'' is the provider requested at \texttt{InferenceSession} creation. ``EP (resolved)'' is the provider actually used at runtime, parsed from per-run telemetry rather than the session's advertised provider list.
\end{table}

\subsection{Silent XNNPACK Fallback and Methodological Implication}

We initially attempted to run MobileNetV2 and a ResNet variant on Node A using pip \ort{} 1.25 with \texttt{XnnpackExecutionProvider}. The provider was accepted at \texttt{InferenceSession} creation without exception, but every recorded \texttt{backend\_providers} field resolved to \texttt{CPUExecutionProvider}. The runs were a CPU-versus-CPU comparison and are excluded from positive evidence. The artifact bounds the dispatch positive in Section~\ref{sec:dispatch} to \textsc{CifarSmall} on Cortex-A76 under the Debian build. It is also a concrete instance of what Sculley et al. \cite{sculley2015hidden} call configuration debt: dependency state diverges silently from the developer's mental model. A practitioner reading the session's advertised provider list would conclude XNNPACK is active when it is not, with downstream consequences for any analysis that treats execution provider as an experimental factor. We recommend reading the resolved provider from per-run runtime telemetry rather than from the session object.

The production-CNN experiment uses TFLite specifically to sidestep this artifact, since TFLite/XNNPACK dispatch behavior is documented and additionally verifiable through timing signatures (Section~\ref{sec:generalization}). Two preliminary framework experiments at the same scale, ONNX Runtime 1.25.0 \texttt{CPUExecutionProvider} and PyTorch 2.11.0 qnnpack INT8, also produced byte-identical outputs on both nodes for MobileNetV2 and ResNet18. Post-hoc analysis showed the ONNX Runtime aarch64 wheel uses reference kernels without architecture-specific dispatch, so that null carries no positive information about microkernel divergence. The qnnpack dispatch claim was not profiled. The TFLite experiment resolves the ambiguity through documented dispatch logic and corroborating timing evidence.

\section{Discussion of Results/Findings}
\subsection{The Dispatch Axis on ARM}
\label{sec:dispatch}

Under CPU EP (Conditions B, C, D, E), all 1{,}000 outputs are byte-identical across all four nodes for both small-CNN on fp32 and int8 precision ($\rpn = 23$ for every image, one EQC, 0 of 1{,}000 divergent). This holds across three boundaries simultaneously: A53 versus A72 versus A76 (three microarchitectures), BCM2837-A53 versus RP3A0-A53 (two SoCs at fixed microarchitecture), and Debian \ort{} 1.21 versus pip \ort{} 1.26 (two runtime builds). For the operators in this model under the CPU provider, ARM microarchitecture does not fragment EQCs.

The picture inverts when only the execution provider changes. For fp32 small-CNN, Condition A (XNNPACK on A76) and Condition B (CPU on A76) differ on 1{,}000 of 1{,}000 images (100\% divergence). Mean RP is 14.97 with standard deviation 2.28 and range 3 to 20, meaning 8.03 mantissa bits are lost per image on average. The RP distribution is a broad ridge several bits wide (Fig.~\ref{fig:rp-distributions}), not a few outliers around the reference, which indicates a systematic dispatch effect rather than a tail event.

INT8 QDQ collapses both axes at once. For int8 small-CNN, all five conditions are byte-identical ($\rpn = 23$ for every image, one EQC, 0 of 1{,}000 divergent). The reproducibility gain spans hardware (A53, A72, A76 across two SoC families) and dispatch (XNNPACK and CPU on Cortex-A76) simultaneously. Table~\ref{tab:summary} summarizes per-condition divergence and RP relative to Condition B.

\begin{table}[!t]
\caption{Per-Condition Divergence and RP vs. Condition B (FP32, CPU, A76)\label{tab:summary}}
\centering
\footnotesize
\begin{tabular}{@{}llrrrr@{}}
\toprule
Model & Cond & Node & Diverge & Mean RP & Range \\
\midrule
fp32 & A    & Node A & 100.0\% & 14.97 & 3 to 20 \\
fp32 & B    & Node A & 0.0\%   & 23.00 & 23 \\
fp32 & C    & Node B & 0.0\%   & 23.00 & 23 \\
fp32 & D    & Node C & 0.0\%   & 23.00 & 23 \\
fp32 & E    & Node D & 0.0\%   & 23.00 & 23 \\
int8 & A--E & all    & 0.0\%   & 23.00 & 23 \\
\bottomrule
\end{tabular}
\\[2pt]
\raggedright
$n = 1{,}000$ images per FP32 condition. Combined $n = 5{,}000$ for INT8.
\end{table}

\begin{figure}[!t]
\centering
\includegraphics[width=.75\linewidth]{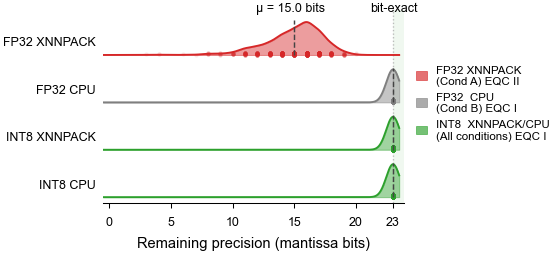}
\caption{Distribution of remaining precision over 1{,}000 CIFAR-10 test images for the four distinct condition-and-format combinations on Cortex-A76. Only the FP32 XNNPACK condition departs from byte-identity with the reference. The other three collapse to the $\rpn = 23$ ceiling. Conditions C, D, and E under CPU EP coincide with Condition B and are summarized in Table~\ref{tab:summary} rather than shown as separate ridges.}
\label{fig:rp-distributions}
\end{figure}

The structure of the FP32 case presents a dispatch fingerprint invisible to label-level accuracy metrics: 100\% of output hashes disagree while 0\% of predicted labels do for this model on these images. Anything that consumes continuous outputs (confidence calibrators \cite{guo2017calibration}, anomaly scorers, audit logs, per-device A/B comparisons) inherits the drift. Whether that matters depends on what is downstream of the model, not on whether the model classifies correctly.

\subsection{The H1+H2 Mechanism}
\label{sec:mechanism}

Why does INT8 QDQ collapse the dispatch fingerprint? Three sub-experiments on Node A (XNNPACK versus CPU, \ort{} 1.21, $n = 50$ CIFAR-10 images per scenario) isolate the answer and motivate a two-part structural invariant.

\subsection{Localization to Conv0}

We extract single-output prefix subgraphs of fp32 small-CNN at each of the 12 layer boundaries using \texttt{onnx.utils.extract\_model} and compare execution providers (Table~\ref{tab:layerwise}, FP32 column). Divergence originates at Conv0, where mean RP is only 2.56 bits, and partially attenuates as the signal moves downstream, reaching mean RP 15.32 at the output. The profile is inconsistent with an error-accumulation story (in which downstream operators would amplify drift) and consistent with low-order mantissa noise being smoothed by subsequent reductions, including ReLU clipping and MaxPool selection.

We then insert one \texttt{QuantizeLinear} and \texttt{DequantizeLinear} pair only at the input, using $s = 0.003922$ and $z = -128$ from the calibrated INT8 model, leaving all internal computation in FP32 (Table~\ref{tab:layerwise}, input-only QDQ column). Conv0 becomes bit-exact across execution providers, which establishes the first half of the mechanism.

\begin{cdef}[H1, discrete-grid absorption]
\label{def:h1}
Let $C$ be a Conv or Gemm operator in an ONNX graph whose input activation tensor $a$ satisfies $a_i \in \grid_{s,z}$ elementwise for some scale $s > 0$ and zero-point $z \in \mathbb{Z}$. We call $C$ \emph{discrete-grid absorbing} if the FP32 output of $C$ is byte-identical under the execution providers being compared.
\end{cdef}

Conv1 immediately reverts to divergence because Conv0's continuous-valued FP32 outputs are no longer on the grid. Restoring grid membership at every layer boundary is therefore necessary, not just sufficient, to extend H1 through the network.

\subsection{The Requantization Ratchet}

We instrument the full int8 small-CNN graph with \texttt{Identity} nodes that expose each of the 9 \texttt{QuantizeLinear} activation outputs as named tensors and compare them element-wise between execution providers. All nine INT8 tensors are byte-identical (0 of 50 divergent at every hook), which is the second structural property.

\begin{cdef}[H2, requantization ratchet]
\label{def:h2}
A QDQ-quantized graph satisfies the \emph{requantization ratchet} property if, for every \texttt{QuantizeLinear} node in the graph, the INT8 tensor it emits is byte-identical under the execution providers being compared.
\end{cdef}

H1 makes any single Conv deterministic across dispatch paths when its inputs lie on the int8 grid. H2 ensures that the precondition is restored at every layer boundary. The combined invariant follows.

\begin{cprop}[H1+H2, empirical statement]
\label{prop:h1h2}
Let $G$ be a QDQ-quantized graph composed of the operators Conv, Gemm, ReLU, MaxPool, Flatten, QuantizeLinear, and DequantizeLinear. If every Conv and Gemm in $G$ is preceded by a DequantizeLinear whose input lies on $\grid_{s,z}$ for the calibrated $(s, z)$, and every activation tensor in $G$ is followed by a QuantizeLinear before the next Conv or Gemm, then $G$'s FP32 output is byte-identical under the execution providers being compared.
\end{cprop}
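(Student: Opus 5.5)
The natural route is induction over a topological ordering of the nodes of $G$, using Definitions~\ref{def:h1} and~\ref{def:h2} as the two inductive ingredients. The invariant I would carry is this: every tensor produced so far is byte-identical under the execution providers being compared. The base case is the graph input, which is SHA-256-verified identical across nodes. For the inductive step I would split the operators into three classes and treat each in turn.

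The first class is the \emph{exact} operators: ReLU, MaxPool, Flatten and DequantizeLinear. ReLU and MaxPool are comparisons and selections on already-identical FP32 values, so they introduce no rounding. Flatten is a pure relayout. DequantizeLinear computes $s(q-z)$ with a single IEEE-rounded multiply. The only caveat I would record here is that a provider might fuse the DQ into the consumer; that case is handled inside the Conv/Gemm step.

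The second class is Conv and Gemm. By hypothesis, each is preceded by a DequantizeLinear whose INT8 input is identical by induction. Its activation therefore lies elementwise on $\grid_{s,z}$, and H1 (Definition~\ref{def:h1}) applies to give a byte-identical FP32 output. The third class is QuantizeLinear. H2 (Definition~\ref{def:h2}) asserts directly that its INT8 output is identical. The structural hypothesis that every activation passes through a QuantizeLinear before the next Conv or Gemm then re-establishes the on-grid precondition for the following layer. Because the graph is acyclic and every node falls into one of these classes, the induction closes at the output, giving byte-identity of $G$'s FP32 output.

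The main obstacle is that H1 and H2 are defined as \emph{properties} to be observed, not derived. As stated, the proposition therefore only holds conditionally on every Conv/Gemm being discrete-grid absorbing and the graph satisfying the ratchet. To make it non-circular, I would try to discharge H1 from arithmetic. Write $a_i = s(q_i - z)$ and, for per-tensor symmetric weights, $w_j = s_w p_j$. Every partial sum of $\sum_i w_i a_i$ is then $s s_w$ times an integer bounded by $K\cdot 2^{14}$. For $K \le 2^{10}$ this bound stays below $2^{24}$, so every FP32 partial sum is exact regardless of reduction order. This holds provided $s s_w$ times the integer is representable, or provided providers factor out the scales and accumulate in INT32, as QDQ-fusing kernels do. Either way the sum is order-independent, and the single final rescale and bias add is deterministic.

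H2 then follows from H1. Identical FP32 inputs to a QuantizeLinear, under a fixed round-half-to-even and saturation rule, yield identical INT8 outputs. I expect the delicate points to be the exponent-alignment argument when scales are not powers of two, and the bias add. If those resist a clean treatment, I would fall back to the integer-accumulation view, where exactness is immediate given the Colbert et al.\ overflow bound.
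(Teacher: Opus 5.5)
Your conditional induction is the paper's own argument: H1 applied at each Conv/Gemm, H2 restoring grid membership at each layer boundary, composed through the graph. You also read its status correctly. The paper calls the proposition an empirical statement and takes H1 and H2 as observed properties, supported by the input-only QDQ experiment and the \texttt{Identity}-hook experiment. It offers INT32 accumulation only as the ``most plausible proximate mechanism'' and defers a formal treatment. Up to that point you are aligned. The gap is in your attempt to make the argument non-circular. The FP32-exactness route fails outright, not just at delicate points. DequantizeLinear returns $\mathrm{fl}(s(q-z))$, and each product $w_i a_i$ is rounded again. So partial sums are not $s s_w$ times an integer unless the scales are essentially powers of two, and $s = 0.003922$ is not. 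With $z=-128$ the multiplicand $q-z$ also reaches $255$, so products approach $2^{15}$ rather than $2^{14}$. Finally, $K \le 2^{10}$ would typically exclude the fan-in of the Gemm after Flatten. That leaves the integer path as the only candidate mechanism.

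The decisive failure is ``H2 then follows from H1.'' In the fused \texttt{QLinearConv}/\texttt{QGemm} path there is no FP32 Conv output that both providers share. The INT32 accumulation is exact and order-independent, which is what Proposition~\ref{prop:equiv} establishes. The requantization, however, is provider-specific code: how the multiplier $s_a s_w/s_y$ is formed, whether it is applied in float or as a fixed-point multiply-and-shift, and which rounding rule is used. Even a standalone QuantizeLinear may be computed as $x/s$ by one provider and as $x\cdot(1/s)$ by another, and these disagree near rounding midpoints. Your claim that ``the single final rescale and bias add is deterministic'' conflates determinism within one provider with byte-identity across providers. Byte-identity across providers is an implementation fact that has to be checked. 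That is exactly why the paper records H2 as a separate observation and says a proof would have to trace each provider's dispatch and kernels operator by operator. Note also that the paper's own H1 evidence uses unquantized FP32 weights, which neither of your arithmetic routes covers. As a reconstruction of the paper's conditional argument, your proposal is sound; the claimed unconditional version is not established.
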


We state Proposition~\ref{prop:h1h2} as an empirical property on the operators present in \textsc{CifarSmall} under the specified runtime build. The most plausible proximate mechanism is that \ort{}'s graph optimizer identifies the QDQ pattern and routes Conv and Gemm through \texttt{QLinearConv} and \texttt{QGemm} kernels that perform the dot product in an INT32 accumulator over INT8 multiplicands. Integer addition is associative, so reduction order is irrelevant and SIMD-width differences between XNNPACK and CPU kernels become invisible at the output. A formal treatment would trace dispatch logic operator by operator and verify that the grid-snapped-input precondition triggers the integer kernel path in both providers. Section~\ref{sec:generalization} probes the prediction directly: under a different runtime where different microkernel families are confirmable on different devices, H1+H2 says outputs should remain byte-identical.

\begin{table}[!t]
\caption{Layer-Wise Mean RP, XNNPACK vs.\ CPU on Cortex-A76\label{tab:layerwise}}
\centering
\footnotesize
\begin{tabular}{@{}llrrr@{}}
\toprule
Layer & Op & FP32 & Input-only QDQ & Full INT8 QDQ \\
\midrule
0  & Conv     & 2.56  & 23.00 & 23.00 \\
1  & ReLU     & 3.80  & 23.00 & 23.00 \\
2  & MaxPool  & 5.50  & 23.00 & 23.00 \\
3  & Conv     & 7.48  & 8.70  & 23.00 \\
4  & ReLU     & 8.34  & 10.04 & 23.00 \\
5  & MaxPool  & 10.92 & 12.16 & 23.00 \\
6  & Flatten  & 10.92 & 12.16 & 23.00 \\
7  & Gemm     & 11.22 & 13.48 & 23.00 \\
8  & ReLU     & 12.92 & 14.22 & 23.00 \\
9  & Gemm     & 10.50 & 14.66 & 23.00 \\
10 & ReLU     & 11.64 & 15.10 & 23.00 \\
11 & Output   & 15.32 & 15.48 & 23.00 \\
\bottomrule
\end{tabular}
\\[2pt]
\raggedright
FP32: divergence originates at Conv0 and attenuates downstream. Input-only QDQ: H1 makes Conv0 bit-exact but Conv1 reverts because the grid precondition is lost. Full INT8 QDQ: H2 preserves the precondition through the network.
\end{table}

\begin{figure}[!t]
\centering
\includegraphics[width=.65\linewidth]{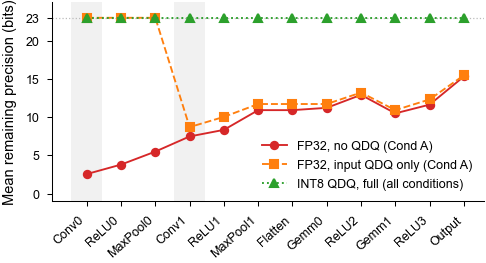}
\caption{Per-layer mean RP across the three quantization scenarios in Table~\ref{tab:layerwise}, measured between XNNPACK and CPU on Cortex-A76. FP32 with no QDQ starts at Conv0 with $\rpn = 2.56$ and partially attenuates downstream. FP32 with input-only QDQ makes Conv0 bit-exact but Conv1 immediately diverges. Full INT8 QDQ is bit-exact at every layer boundary.}
\label{fig:layerwise-rp}
\end{figure}

\section{Conclusions and Recommendations/Future Directions}
\subsection{Generalization to Production CNNs}
\label{sec:generalization}

The small-CNN result and H1+H2 together predict that bit-exact output agreement should extend to deeper CNNs under runtimes where different microkernel families are confirmably executing on different ARM devices, provided the channel dimension fits the INT32 accumulator. We test this on MobileNetV2 and ResNet50V2 under TFLite with XNNPACK on Cortex-A76 (Condition A, SDOT) and Cortex-A72 (Condition C, NEON multiply-accumulate).

A null result at the output level is only interpretable if the hardware actually exercised different arithmetic paths. We cannot claim that different microkernels produced the same output if both devices ran the same code. Two timing signals independently support different dispatch (Fig.~\ref{fig:timing}). First, the INT8/FP32 speedup on A76 is $5.04\times$ for ResNet50V2, substantially exceeding the $4\times$ memory-bandwidth reduction from INT8 weight compression. The additional throughput is consistent with SDOT computing four INT8 multiply-accumulates per instruction cycle. Second, the same INT8/FP32 comparison on A72 yields approximately $1.0\times$, meaning INT8 and FP32 consume nearly identical time. Without SDOT, the NEON INT8 path provides no compute-throughput advantage over FP32 for these layer shapes at batch size 1. This asymmetric signature, large INT8 speedup on A76 and negligible speedup on A72, is not explained by clock speed, cache hierarchy, or memory bandwidth alone. It is consistent with the documented XNNPACK dispatch rule that assigns \texttt{neondot} kernels when \texttt{cpuinfo} reports \texttt{asimddp}, which A76 does and A72 does not. Direct instruction-level profiling via \texttt{perf stat} would provide definitive confirmation. We treat documented source behavior plus matching timing signatures as strong but indirect evidence.

\begin{figure*}[!t]
\centering
\includegraphics[width=\linewidth]{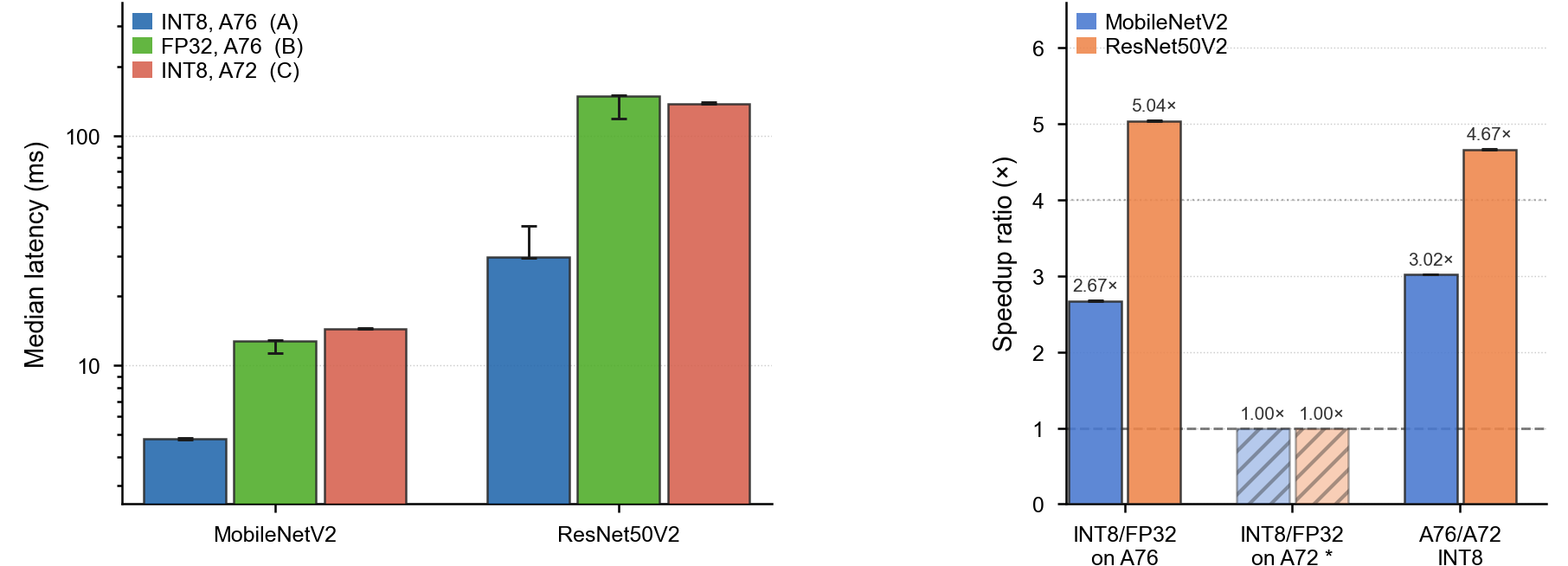}
\caption{Timing evidence for different XNNPACK microkernel dispatch on production CNNs. INT8 inference on A76 (Condition A) is $5.04\times$ faster than FP32 (Condition B) for ResNet50V2, exceeding the $4\times$ speedup from weight compression alone. On A72 (Condition C), INT8 and FP32 latency are nearly identical (137.80 ms versus approximately 138 ms). The asymmetric signature is consistent with SDOT throughput on A76 and its absence on A72.}
\label{fig:timing}
\end{figure*}

\subsection{Byte-Identity at Every Intermediate Tensor}

Across both models and 500 ImageNet images per model, every final output between A76 (SDOT) and A72 (NEON) is byte-identical. A layer-by-layer view rules out the alternative explanation that intermediate divergence exists but is absorbed by the requantization step between layers. On a 50-image subset, we captured all intermediate tensors by querying the TFLite interpreter after each \texttt{invoke()}. For MobileNetV2, all 109 tensors (53 INT8, 54 INT32, 2 FP32) are byte-identical between A and C across all 50 images. For ResNet50V2, all 156 tensors (97 INT8, 57 INT32, 2 FP32) are byte-identical. The raw INT32 bias-add accumulator values, namely the integer dot products before dequantization, are themselves identical. This is the direct empirical signature predicted by H1+H2: not requantization absorbing divergence, but no divergence at any stage.

Top-1 classification agreement between A and C is 500/500 on both models. INT8 versus FP32 top-1 disagreement is 40/500 (MNV2) and 34/500 (RN50V2), reflecting quantization fidelity. Every such disagreement is identical on both platforms, meaning no top-1 disagreement is attributable to the hardware axis.


\subsection{An Equivalence Proof for the Two Kernel Paths}

The byte-identity is not coincidental. The SDOT instruction (\texttt{vdotq\_lane\_s32}) computes
\begin{equation}
\texttt{acc} \mathrel{+}= a_0 b_0 + a_1 b_1 + a_2 b_2 + a_3 b_3,
\label{eq:sdot}
\end{equation}
with $a_i, b_i \in [-128,127]$ and $\texttt{acc} \in \mathbb{Z}_{32}$. Four products are accumulated into a single INT32 register per instruction. Per the ARM Architecture Reference Manual, no saturation is applied. The instruction name carries no \texttt{SQ} or \texttt{UQ} prefix and the pseudocode contains no \texttt{SignedSat} call. Overflow wraps modulo $2^{32}$. The NEON fallback path widens INT8 operands transiently to INT16 via \texttt{SMULL} and \texttt{SMLAL}, then accumulates into INT32 via \texttt{SADALP}. The INT16 values are intermediate operands, not a running sum. In the fallback kernel \texttt{c8-neon-mull.c.in}, a named \texttt{int16x8\_t vprod} holds exactly two INT8 by INT8 products before widening to INT32.

\begin{cprop}
\label{prop:equiv}
For any $K$ pairs $(a_k, b_k) \in [-128,127]^2$ with $K \leq \lfloor 2^{31} / 16{,}384 \rfloor = 131{,}072$, the SDOT and NEON multiply-accumulate paths produce the same INT32 result.
\end{cprop}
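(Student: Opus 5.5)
The plan is to show that, on both kernel paths, every intermediate quantity is an exact integer representation of a partial sum of the true dot product $S=\sum_{k=1}^K a_k b_k$. It then suffices that $|S|$ and every partial sum fit in INT32. Integer addition is associative and commutative, and both paths compute over $\mathbb{Z}$ without rounding. The only ways the two results could differ are therefore saturation or overflow of some intermediate, or a discrepancy introduced by wrapping. I would rule these out step by step, then conclude that both paths return exactly $S$.

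\textbf{Step 1 (per-product bound).} For $a,b\in[-128,127]$ the product satisfies $-16{,}256 \le ab \le 16{,}384$, with the maximum attained at $a=b=-128$. Hence $|ab|\le 2^{14}=16{,}384$. Any partial sum over a subset of at most $K$ indices is then bounded in magnitude by $16{,}384\,K$.

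\textbf{Step 2 (SDOT path).} By Eq.~(\ref{eq:sdot}), each instruction adds $a_0b_0+\dots+a_3b_3$ to \texttt{acc}, the operation is non-saturating, and it wraps modulo $2^{32}$. I would first note that the four-term sum itself is at most $4\cdot 2^{14}=2^{16}$ in magnitude, so it is exact in the instruction's internal datapath. The accumulator after $j$ instructions, tracked over the true integers, is a partial sum of at most $K$ products. Under $K\le 2^{31}/2^{14}=131{,}072$ its magnitude is at most $2^{31}$. Hence it never leaves the representable range, except possibly at the single boundary value $+2^{31}$, which I treat in Step 4. Moreover, even if wrapping occurred, arithmetic mod $2^{32}$ is a ring homomorphism, so the final register would still equal $S \bmod 2^{32}$.

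\textbf{Step 3 (NEON fallback path).} \texttt{SMULL}/\texttt{SMLAL} form an INT16 lane holding at most two products, of magnitude at most $2^{15}$. This is the delicate point: $(-128)(-128)+(-128)(-128)=32{,}768$ overflows INT16. I would argue either that \texttt{SMLAL}'s non-saturating wrap mod $2^{16}$ is undone consistently, which it is not in general, or, more credibly, cite the XNNPACK kernel convention that one operand (the weights) is restricted to $[-127,127]$ in \texttt{qs8} packing. With that restriction the pair sum is at most $2\cdot128\cdot127=32{,}512<2^{15}$, so it is exact. \texttt{SADALP} then sign-extends the pairs into INT32 lanes, and the same partial-sum bound as in Step 2 applies. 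The lane-wise partition of indices differs from SDOT's, but commutativity over $\mathbb{Z}$ makes the final horizontal reduction equal to $S$.

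\textbf{Step 4 (conclusion and main obstacle).} Both paths compute $S$ exactly, or at minimum both compute $S \bmod 2^{32}$ as a wrapped value. In either case they agree. The main obstacle is the INT16 pair in Step 3: the stated hypothesis $a_k,b_k\in[-128,127]$ alone admits $32{,}768$. I would first try to close this with the weight-range restriction $b_k\neq-128$ that symmetric quantization guarantees. Failing that, I would note that the wrapped INT16 value $-32{,}768$ sign-extends incorrectly, so the proposition needs that hypothesis added. A secondary subtlety is the boundary $K=131{,}072$ with all products equal to $+2^{14}$, which gives exactly $2^{31}$ and thus overflow. I would handle this by appealing to the mod-$2^{32}$ agreement argument rather than to exactness.
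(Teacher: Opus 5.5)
Your proof is correct and has the same skeleton as the paper's: bound each product, rule out INT32 overflow, then use exactness, associativity and commutativity of integer addition. This makes SDOT's four-product grouping and the NEON lane-and-pairwise grouping give the same sum. The differences are at the edges, and in both places your version is the accurate one.

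First, the paper's proof asserts $|a_kb_k|\le 127\cdot 128=16{,}256$, which fails at $a_k=b_k=-128$. Your bound $|a_kb_k|\le 2^{14}$ is the right one under the stated hypothesis, and it is why $K=131{,}072$ reaches exactly $2^{31}$.

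Second, the paper's proof claims no saturation ``at any step in either path'' without examining the INT16 lane built by \texttt{SMULL}/\texttt{SMLAL`}. The weight constraint $b_k\in[-127,127]$ appears only in the remark after the proof. Your Step 3 puts that constraint inside the argument. Your example $(-128)(-128)+(-128)(-128)=32{,}768$ correctly shows that the proposition as literally stated over $[-128,127]^2$ fails for the \texttt{c8-neon-mull} kernel, so the weight restriction must be a hypothesis.

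Once it is, $|a_kb_k|\le 16{,}256$ and the paper's inequality $K\cdot 16{,}256\le 2^{31}-1$ holds. Your $+2^{31}$ boundary case then cannot arise, so the mod-$2^{32}$ fallback in Step 4 is unnecessary for the restricted statement.

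What that fallback does buy is a stronger conclusion. Suppose every INT32 operation in both kernels wraps rather than saturates, as \texttt{SDOT} and \texttt{SADALP} do. Then exactness of the INT16 step alone makes both paths return $S \bmod 2^{32}$ for every $K$. The bound on $K$ is needed only to make that common value equal the true dot product $S$, not for the two paths to agree.
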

\begin{proof}
Each product satisfies $|a_k b_k| \leq 127 \times 128 = 16{,}256 < 2^{14}$. The sum of $K$ such products is bounded by $K \cdot 16{,}256$. For $K \leq 131{,}072$ this is bounded by $2^{31} - 1$, so no INT32 overflow occurs. Both paths compute $\sum_{k=0}^{K-1} a_k b_k$ exactly in INT32. Since integer addition is commutative and associative and all intermediate operations are exact (no rounding, no saturation at any step in either path), the grouping of four products per SDOT instruction versus one product per NEON iteration does not affect the result.
\end{proof}

The channel dimensions of MobileNetV2 ($K_{\max} = 1280$) and ResNet50V2 ($K_{\max} = 2048$) satisfy the bound by orders of magnitude, as does every standard CNN architecture in the edge deployment literature. XNNPACK additionally constrains symmetric-quantized weights to $[-127, 127]$ \cite{dukhan2021xnnpack}, which ensures the \texttt{SMLAL} INT16 intermediate products fit in 15 bits and the two-product partial sum in the fallback kernel does not saturate INT16. Proposition~\ref{prop:equiv} is the operator-level expression of H1+H2 under the specific XNNPACK kernel implementations involved.

\subsection{The x86 Boundary}
\label{sec:x86}

If integer accumulation in INT32 is associative and ARM kernels respect that, why does x86 diverge on the same INT8 inference task? The answer is a specific x86 instruction, \texttt{PMADDUBSW}, which has no ARM analogue. The x86 TFLite INT8 GEMM path uses
\begin{equation}
\text{result}_i = \text{saturate}_{\text{int16}}(a_{2i} \cdot b_{2i} + a_{2i+1} \cdot b_{2i+1}),
\end{equation}
which clips to $[-32{,}768, 32{,}767]$. For unsigned activations with zero-point shifted into $[0,255]$, saturation triggers for moderate values. As an example, $200 \times 90 + 210 \times 90 = 36{,}900 > 32{,}767$. Different SSE and AVX implementations partition the reduction tree differently, determining which operand pairs are processed together by \texttt{PMADDUBSW} and therefore which pairs saturate. The integer kernel itself is the source of order-dependence (Fig.~\ref{fig:accum}). This is the root cause of x86 EQC divergence on INT8 inference, not floating-point non-associativity but integer saturation at an intermediate width. The oneDNN documentation \cite{onednn_int8} describes the mechanism explicitly and documents the $0.5\times$ weight-scaling workaround for non-VNNI platforms. Intel's VNNI extension (\texttt{VPDPBUSD}) resolves the issue by accumulating directly into INT32 \cite{onednn_int8}, mirroring the ARM design.

ARM has no \texttt{PMADDUBSW} equivalent. Neither SDOT nor the NEON \texttt{SMLAL} and \texttt{SADALP} path produces saturating intermediates. The ARM null result is therefore not a contradiction of Schl\"ogl et al. but a delineation. Their finding applies to x86 implementations that use saturating INT16 intermediates. It does not apply to ARM XNNPACK, where the saturation mechanism does not exist.

A practical consequence follows for hardware fingerprinting. The HSPI framework \cite{zhang2025hspi} exploits hardware-linked numerical divergence to identify serving platforms from model outputs. The analogous approach for ARM, namely using border inputs or logit distribution classifiers to distinguish A76 from A72 based on INT8 CNN outputs, cannot succeed for XNNPACK-dispatched inference because outputs are byte-identical for any input satisfying the channel bound. Hardware auditing methods for ARM edge deployments must target other layers of the stack, such as FP32 inference, autoregressive generation, or timing side channels.

\begin{figure*}[!t]
\centering
\includegraphics[width=\linewidth]{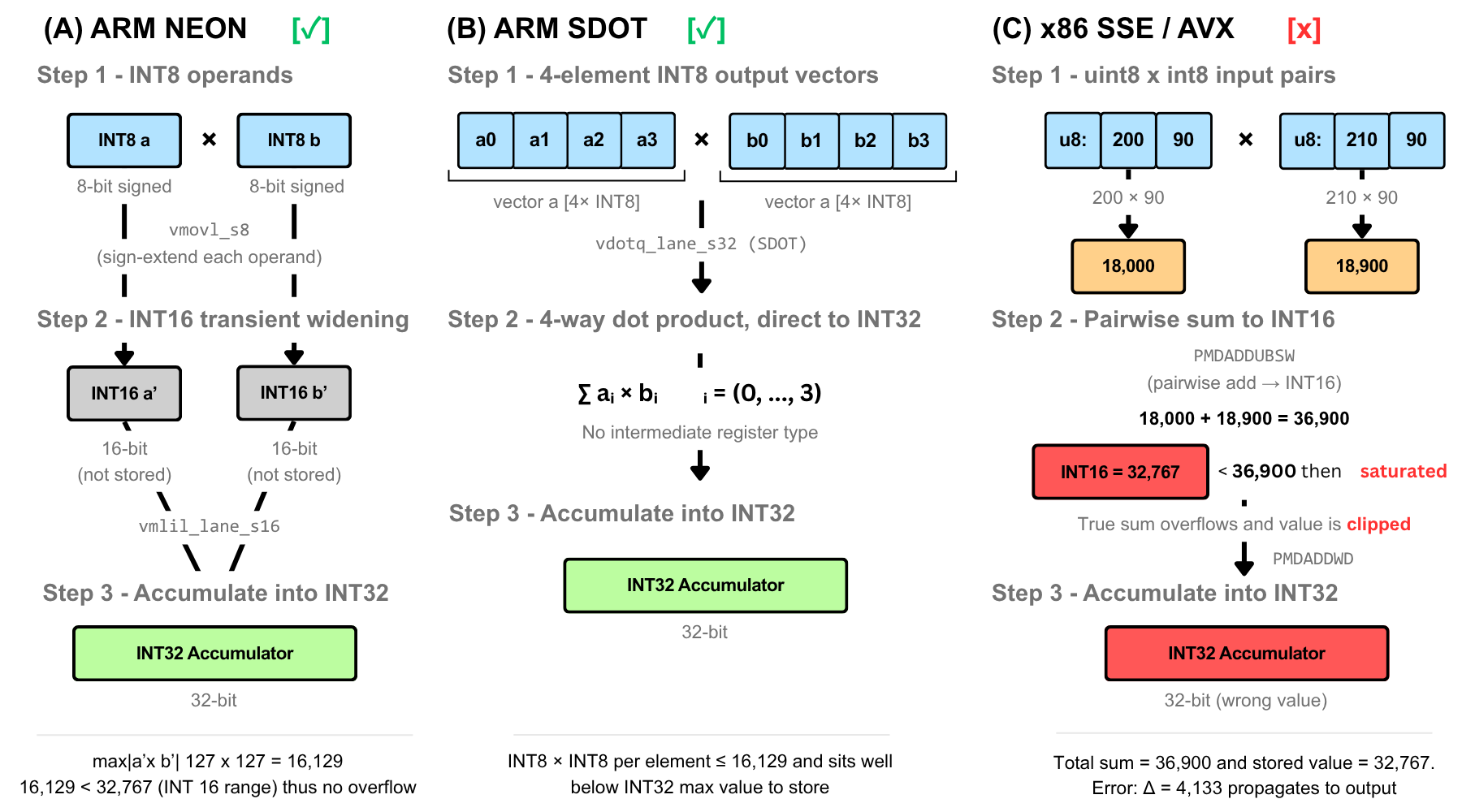}
\caption{INT8 GEMM accumulation paths on ARM and x86. ARM NEON widens INT8 operands transiently to INT16 via \texttt{SMULL} and \texttt{SMLAL}, then accumulates into INT32 via \texttt{SADALP} with no saturation at any step. ARM SDOT computes four INT8 products and accumulates into INT32 in one cycle with no saturation. The x86 \texttt{PMADDUBSW} instruction produces saturating INT16 output when the sum of two uint8 by int8 products exceeds 32{,}767, reachable for plausible quantized activations. Different SSE and AVX implementations partition the reduction tree differently, causing saturation to occur at different operand pairs and producing different INT32 accumulator values \cite{onednn_int8}.}
\label{fig:accum}
\end{figure*}

\subsection{Deployment Consequences, Limitations, and Conclusion}

\paragraph{Latency and Label Stability}

Across the small CNN, INT8 reduces mean per-image inference latency by 1.29 to 1.46 times across all five conditions, with the largest gain on A76 under CPU EP (Table~\ref{tab:latency}). The speedup holds across XNNPACK and CPU providers and across three microarchitectures, suggesting it comes from intrinsic 8-bit arithmetic and reduced memory traffic rather than from a single runtime-tuned kernel. On the production CNNs under TFLite, INT8 reduces median per-image latency from 12.82 to 4.80 ms on MobileNetV2 ($2.67\times$) and from 148.91 to 29.54 ms on ResNet50V2 ($5.04\times$) on Cortex-A76, with the ResNet50V2 figure attributable to SDOT throughput.

On the small CNN, INT8 changes the predicted class on 27 of 1{,}000 images relative to the FP32 reference, a rate of 2.7\%, concentrated in the model's low-confidence region (mean FP32 confidence 0.42 on the disagreement subset versus 0.83 across all 1{,}000 images). A 0.5 confidence threshold on FP32 outputs recovers full label consistency on this dataset. On MobileNetV2 and ResNet50V2 under TFLite, INT8 versus FP32 top-1 disagreement is 8.0\% and 6.8\% respectively. Every such disagreement is identical on A76 and A72, so the hardware contribution to label drift is zero.

For deployment, the operational consequence is direct. Continuous outputs are reproducible across the ARM fleet under INT8 QDQ. A small and predictable subset of labels changes against the FP32 reference, concentrated in the low-confidence tail. Which tradeoff a deployment should prefer depends on what consumes the model's outputs, not on which mode is correct.

\begin{table}[!t]
\caption{Mean Per-Image Inference Latency, Small CNN, 1{,}000 CIFAR-10 Images\label{tab:latency}}
\centering
\footnotesize
\begin{tabular}{@{}lllrrr@{}}
\toprule
Cond & $\mu$arch & EP & FP32 (ms) & INT8 (ms) & Speedup \\
\midrule
A & A76 & XNNPACK & 0.194 & 0.143 & 1.36 \\
B & A76 & CPU     & 0.288 & 0.197 & 1.46 \\
C & A72 & CPU     & 0.895 & 0.678 & 1.32 \\
D & A53 & CPU     & 2.103 & 1.626 & 1.29 \\
E & A53 & CPU     & 2.385 & 1.806 & 1.32 \\
\bottomrule
\end{tabular}
\end{table}

\paragraph{Limitations}

The dispatch positive in Section~\ref{sec:dispatch} is measured on one small CNN on Cortex-A76 because that is the only Cortex device for which we obtained a working XNNPACK execution provider under \ort{} on ARM, given the silent fallback in pip wheels. The H1+H2 mechanism in Section~\ref{sec:mechanism} is empirically demonstrated on the operators present in \textsc{CifarSmall}: Conv, Gemm, ReLU, MaxPool, Flatten, QuantizeLinear, DequantizeLinear. Proposition~\ref{prop:h1h2} is stated as an empirical property, not as a theorem. A formal treatment would trace \ort{}'s dispatch logic to confirm that the QDQ pattern routes Conv and Gemm through integer kernels in both providers. The generalization in Section~\ref{sec:generalization} is performed on two ARM microarchitectures. The equivalence proof generalizes to any platform running XNNPACK NEON or dot-product kernels with INT32 accumulators, but empirical confirmation on Cortex-A55, A78, the Cortex-X series, and Qualcomm Oryon is reserved for follow-up. Float32 inference was not studied for cross-hardware divergence on the production CNNs. Floating-point non-associativity may produce divergence even on ARM. The thermal state of Pi 5 (88 to 91 degrees Celsius during sustained ResNet50V2 inference) introduces latency variability that does not affect output correctness but limits the precision of timing-based dispatch claims. Direct instruction-count confirmation via \texttt{perf stat} has not been performed. Models with channel dimensions $K > 131{,}072$ or asymmetric weight ranges exceeding $[-127, 127]$ fall outside the scope of Proposition~\ref{prop:equiv}.

\section{Conclusion}

On ARM edge hardware, microarchitecture is not observable in the outputs of a fixed neural network under default CPU dispatch, but the execution provider is. INT8 QDQ post-training quantization collapses both axes to a single equivalence class as a structural property of the QDQ encoding rather than as a consequence of runtime tuning. The structural account predicts and the production-CNN measurement confirms that bit-exact agreement persists under runtimes where different ARM microkernel families demonstrably execute, with the predicted INT32-accumulator equivalence holding for every layer of MobileNetV2 and ResNet50V2 between Cortex-A76 SDOT and Cortex-A72 NEON multiply-accumulate paths. The boundary of the phenomenon is the x86 \texttt{PMADDUBSW} instruction. For practitioners deploying quantized CNNs across heterogeneous ARM fleets, INT8 is the reproducible mode, the relevant behavioral variation axis is precision and not microarchitecture, and any hardware-auditing method that relies on output divergence to fingerprint ARM edge devices must target some layer of the stack other than INT8 CNN inference under XNNPACK.

\section*{Acknowledgments}

\bibliography{preprint-int8-quant-invariant-arm/ref}

\end{document}